\title{Hardness of Liar's Domination on Unit Disk Graphs}
\author{Ramesh K. Jallu and Gautam K. Das\thanks{corresponding author} \\
\small Department of Mathematics\\\small Indian Institute of Technology Guwahati\\ \small\{j.ramesh, gkd\}@iitg.ernet.in}
\newtheorem{theorem}{Theorem}
\newtheorem{lemma}{Lemma}
\begin{document}

 \maketitle

\begin{abstract}
 A unit disk graph is the intersection graph of a set of unit diameter disks in the plane.
 In this paper we consider liar's domination problem on unit disk graphs,
 a variant of dominating set problem. We call this problem as {\it Euclidean
 liar's domination problem}. In the Euclidean liar's domination
 problem, a set ${\cal P}=\{p_1,p_2,\ldots,p_n\}$  of $n$ points
 (disk centers) are given in the Euclidean plane. For $p \in {\cal P}$,
 $N[p]$ is a subset of ${\cal P}$ such that for any $q \in N[p]$, the
 Euclidean distance between $p$ and $q$ is less than or equal to 1, i.e.,
 the corresponding unit diameter disks intersect. The objective of the Euclidean liar's domination problem is to find
 a subset $D\; (\subseteq {\cal P})$ of minimum size having the following
 properties : (i) $|N[p_i] \cap D| \geq 2$ for
 $1 \leq i \leq n$, and (ii) $|(N[p_i] \cup N[p_j]) \cap D| \geq 3$
 for $i\neq j, 1\leq i,j \leq n$. This article aims to prove the Euclidean
 liar's domination problem is NP-complete.
\end{abstract}

\section{Introduction}
Let $G=(V,E)$ be a graph. For a vertex $v \in V$, we define
$N[v] = \{u\in V \mid (v,u)\in E\} \cup \{v\}$. A subset $D$ of $V$ is a {\bf liar's
dominating set} if (i) for every $v \in V$, $|N[v]\cap D| \geq 2$,
and (ii) for every distinct pair of vertices $u$ and $v$,
$|(N[u]\cup N[v])\cap D| \geq 3$. Liar's domination
problem in a graph $G=(V,E)$ asks to find a
liar's dominating set of $G$ with minimum size.

\subsection{Related work}
The liar's domination problem is introduced by Slater in 2009 and
showed that the problem is NP-hard for general graphs \cite{slater}.
Later, Roden and Slater showed that the problem is NP-hard even for
bipartite graphs \cite{roden}. Panda and Paul \cite{panda2013liar}
proved that the problem is NP-hard for split graphs and chordal graphs.
The authors also proposed a linear time algorithm for computing a minimum
cardinality liar's dominating set in a tree.

\subsection{Our work}
A {\it unit disk graph} (UDG) is an intersection graph of a
family of unit diameter disks in the plane.
Given a set $C = \{C_1, C_2,\ldots, C_n\}$ of $n$ circular disks
in the plane, each having diameter 1, the corresponding UDG
$G = (V,E)$ is defined as follows: each vertex $v_i \in V$
corresponds to a disk $C_i \in C$, and there is an edge between
two vertices $v_i$ and $v_j$ if and only if $C_i$ and $C_j$ intersect.

In this paper we consider the geometric version of the liar's domination
problem and we call it as \emph{Euclidean liar's domination problem}.
In the Euclidean liar's domination problem we are given a UDG and a set
${\cal P}$ of $n$ disk centers of the given UDG in the plane. For $p \in {\cal P}$, $N[p]$
is a subset of ${\cal P}$ such that for any $q \in N[p]$, the Euclidean
distance between $p$ and $q$ is less than or equal to 1. We define
$\Delta = \max\{|N[p]| :  p\in{\cal P}\}$. The objective of
the Euclidean liar's domination problem is to find a minimum
size subset $D$ of ${\cal P}$ such that (i) for every point in ${\cal P}$ there exists
at least two points in $D$ which are at most distance one, and
(ii) for every distinct pair of points $p_i$ and $p_j$ in
${\cal P}$, $|(N[p_i]\cup N[p_j])\cap D|\geq 3$, in other words,
the number of points in $D$ that are within unit distance with
points in the closed neighborhood union of $p_i$ and $p_j$ is
at least three.

\section{Complexity}
 In this section we show that the Euclidean liar's domination problem is
 NP-complete for UDGs. The decision version of liar's dominating set
 of a UDG can be defined as follows.

\begin{description}
 \item[UDG LIAR'S DOMINATING SET (UDG-LR-DOM)]
 \item[Instance :] A unit disk graph $G=(V,E)$ and a positive integer $k$.
 \item[Question :] Does there exist a liar's dominating set $L$ of
 $G$ such that $|L|\leq k$?
\end{description}
 We prove the NP-completeness of UDG-LR-DOM by reducing dominating set
 problem defined on a planar graph with maximum degree 3 to it, which
 is known to be NP-complete \cite{gary}. The decision version of  dominating set
 of a planar graph with maximum degree 3 can be defined as follows.

\begin{description}
 \item[PLANAR DOMINATING SET (PLA-DOM)]
 \item[Instance :] A planar graph $G=(V,E)$ with maximum degree 3 and
 a positive integer $k$.
 \item[Question :] Does there exist a dominating set $D$ of
 $G$ such that $|D|\leq k$?
\end{description}

\begin{lemma}[\cite{valiant}] \label{key-lemma}
 A planar graph $G=(V,E)$ with maximum degree 4 can be embedded in the plane
 using $O(|V|)$ area in such a way that its vertices are at integer
 co-ordinates and its edges are drawn so that they are made up of line
 segments of the form $x=i$ or $y=j$, for integers $i$ and $j$.
\end{lemma}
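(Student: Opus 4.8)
The plan is to reduce the statement to the existence of a crossing-free \emph{rectilinear} (orthogonal) grid drawing of $G$, and then to bound its size. First I would fix a planar embedding of $G$, that is, a crossing-free drawing together with the induced clockwise cyclic order of the incident edges around each vertex. Starting from a straight-line planar grid drawing (for instance the shift-method drawing of de Fraysseix, Pach and Pollack, which places the $|V|$ vertices at distinct integer points inside a grid whose side is $O(|V|)$), I obtain simultaneously the combinatorial embedding and an initial placement of the vertices at integer coordinates. This first step is essentially off-the-shelf and is not where the difficulty lies.

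Second, I would convert the (generally slanted) straight-line edges into paths built only from unit segments of the form $x=i$ or $y=j$. The key local fact is the degree bound: since every vertex has degree at most $4$, its incident edges can be assigned injectively to the four axis directions (up, down, left, right) in a way consistent with the cyclic order recorded above. After refining the grid by a linear factor, so that each vertex owns four free ``ports'' and there are empty grid lines separating the vertices, each edge is rerouted as a monotone staircase or $L$-shaped orthogonal path joining the port at one endpoint to the port at the other. The refined grid still has side $O(|V|)$, which gives the area accounting claimed in the statement.

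The delicate step, and the one I expect to be the main obstacle, is to perform all of the edge reroutings \emph{simultaneously without introducing crossings} while keeping the grid expansion within the stated bound. Routing edges one at a time greedily can force later edges to detour and blow up the area, so instead I would drive the construction from a \emph{visibility representation} of the planar embedding: horizontal segments for vertices and vertical segments for edges, computable in a grid of linear dimensions. I would then contract each vertex bar to a point, distributing its at most four edge attachments to the four cardinal ports. The resulting axis-parallel routes inherit planarity from the visibility representation, so no two cross except at a common endpoint, and the whole figure sits in a grid of side $O(|V|)$, yielding the required integer-coordinate, axis-parallel embedding within the area bound of the statement. Since this is exactly Valiant's lemma, the remaining bookkeeping of the constant factors is the content of \cite{valiant}, which I would simply cite rather than reproduce.
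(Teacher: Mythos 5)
The paper does not actually prove this lemma: it is imported verbatim from Valiant \cite{valiant} (it is the same statement quoted in Clark, Colbourn and Johnson \cite{clark}), and the authors only remark that algorithms producing such embeddings are discussed in \cite{itai,hopcroft}. So there is no in-paper argument to compare against; your proposal has to stand on its own as a proof sketch, and as such it contains one genuine gap.

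The gap is the area accounting. Both of your starting points --- the de Fraysseix--Pach--Pollack shift drawing and a visibility representation --- live on a grid of \emph{side} $O(|V|)$, hence of \emph{area} $O(|V|^2)$, and refining the grid by a constant or linear factor does not improve this. The lemma claims $O(|V|)$ area, and closing the gap from quadratic to linear area is not ``bookkeeping of constant factors'': it is the substantive content of Valiant's layout construction, which relies on separator-based recursion rather than a single global visibility representation. So either you must genuinely invoke \cite{valiant} for the area bound, in which case the rest of your construction is redundant, or you have proved only a weaker $O(|V|^2)$-area version of the statement. (For the NP-completeness reduction in this paper the weaker version suffices, since any polynomially bounded embedding computable in polynomial time works, but it is not the statement as written.) A smaller issue: in a visibility representation every edge attaches to the top or bottom of a vertex bar, so a degree-$4$ vertex may receive all four incident edges from the same side; after contracting the bar to a point, redistributing these attachments to the four cardinal ports requires an explicit local rerouting step to avoid overlapping segments, which your sketch asserts but does not supply.
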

 Algorithms to produce such embeddings are discussed in \cite{itai,hopcroft}.
 Many standard graph theoretic problems on UDGs are shown to be NP-complete
 with the aid of Lemma \ref{key-lemma} \cite{clark}.

\begin{lemma}
Let $G=(V,E)$ be a planar graph with maximum degree 3 and $|E| > 2$. $G$ can be
embedded in the plane such that its vertices are at $(4i,4j)$
and its edges are drawn as a sequences of
consecutive line segments drawn on the lines $x=4i$ or $y=4j$ for
some integers $i$ and $j$ .
\end{lemma}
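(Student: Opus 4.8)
The plan is to derive the desired embedding from Valiant's embedding (Lemma \ref{key-lemma}) by a single rescaling. Since $G$ has maximum degree $3 \le 4$, Lemma \ref{key-lemma} applies and yields an embedding of $G$ in which every vertex sits at an integer coordinate $(i,j)$ and every edge is drawn as a rectilinear path, i.e.\ a sequence of consecutive axis-parallel segments, each lying on a line $x=i$ or $y=j$ for integers $i,j$. In such a drawing the bend points of an edge (where a horizontal segment meets a vertical one) and the endpoints of all segments occur at integer lattice points, and by planarity no two edges cross.

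The key step is to apply the scaling map $\phi(x,y)=(4x,4y)$ to the whole drawing. Under $\phi$ a vertex at $(i,j)$ is sent to $(4i,4j)$, so all vertices now sit at points of the required form $(4i,4j)$. A segment on the line $x=i$ is carried to a segment on $x=4i$, and a segment on $y=j$ to one on $y=4j$; hence every edge is still drawn as a sequence of consecutive axis-parallel segments, now supported only on the coarser grid lines $x=4i$ and $y=4j$. Because the bend points were at integer lattice points, their images are again of the form $(4i,4j)$, so the rectilinear structure of each edge is preserved exactly.

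It then remains to confirm that $\phi$ produces a valid planar drawing. Since $\phi$ is an injective affine map, and in fact a homeomorphism of the plane, it preserves incidences and, crucially, non-incidences: two segments meet after scaling if and only if they met before, so no new crossings are created and the drawing stays planar. The hypothesis $|E|>2$ serves only to guarantee that $G$ actually has edges to route, keeping the statement non-vacuous; it plays no role in the geometric argument.

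I do not expect a genuine obstacle in this lemma; the single point needing care is to verify that rescaling does not disturb the axis-parallel, on-grid structure of the edges, and this is immediate from the linearity of $\phi$. The one design choice worth emphasizing is the specific factor $4$: scaling by $4$ leaves exactly three integer grid points strictly between the images of any two adjacent original lattice points, and these free grid positions are precisely what will provide room to insert the auxiliary point gadgets in the subsequent reduction. A smaller scaling factor would embed $G$ just as faithfully but would not create the slack needed for that later construction.
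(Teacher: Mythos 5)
Your proof is correct and matches the paper's intended argument: the paper states this lemma without an explicit proof, but the surrounding text (the $4\times 4$ grid cells and the length-$4$ segments used later) makes clear that the intended justification is exactly your rescaling of the Valiant embedding from Lemma \ref{key-lemma} by a factor of $4$. Your observations that planarity is preserved under the scaling homeomorphism and that $|E|>2$ plays no geometric role are both accurate.
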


 In summary, we can draw a planar graph $G=(V,E)$ of maximum degree
 3 on a grid in the plane, where each grid cell is of size $4 \times 4$, such that :
  
 \begin{enumerate}
 \begin{figure}
  \centering
  \mbox{
  \subfigure[]{\includegraphics[scale=0.5]{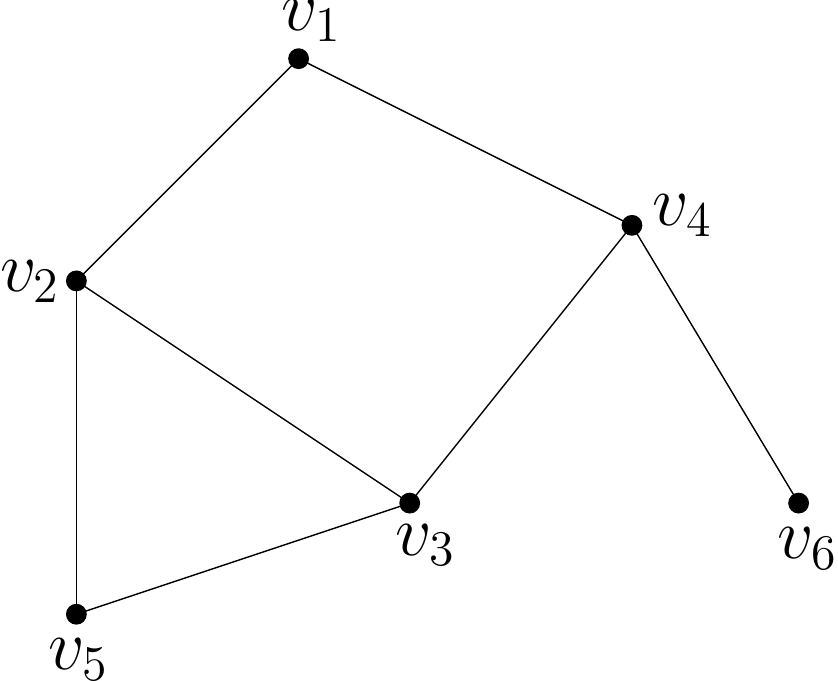}}
  \quad
  \subfigure[]{\includegraphics[scale=0.5]{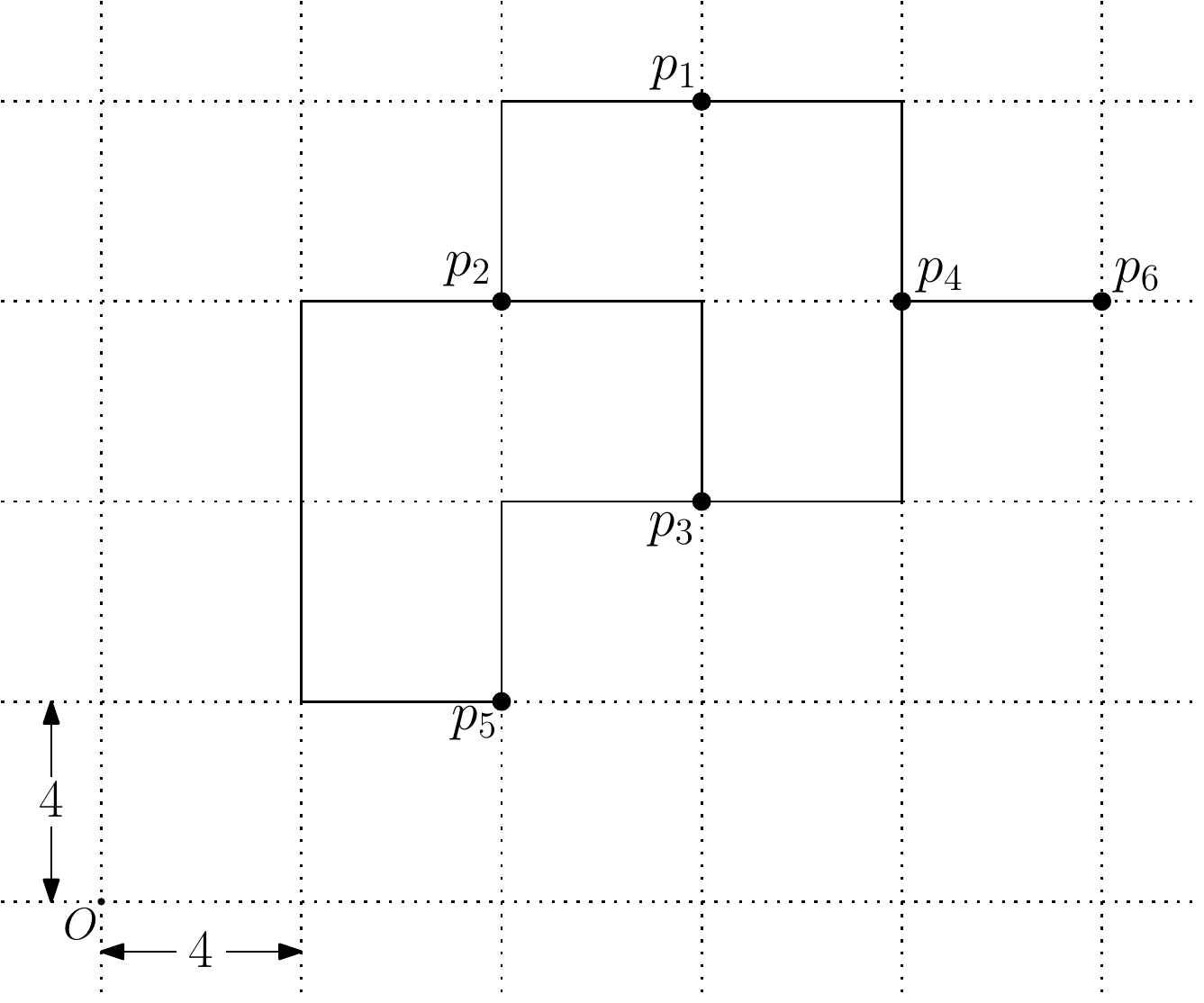}}}
  \caption{
  (a) A planar graph $G$ with maximum degree 3,
  (b) An embedding of $G$ on a grid in the plane.}
  \end{figure}\label{graph_grid}
  \item Each vertex $v_i$ in $G$ is replaced by a point $p_i$ in the plane.
  \item The co-ordinates of each point $p_i$ (corresponding to a vertex $v_i$) are
  $(4i,4j)$ for some integers $i$ and $j$ (see Figure \ref{graph_grid}).
  \item An edge between two points is represented as a sequences of
  consecutive line segments and is drawn on the lines $x=4i$ or $y=4j$
  for some integers $i$ or $j$ (these consecutive line segments may
  bend at some positions of the form $(4i',4j')$).
  \item No two lines representing edges of $G$ intersect each other,
  i.e., any two set of consecutive line segments correspond to two
  distinct edges of $G$ can not have a common point unless the edges
  incident at a vertex in $G$.

 \end{enumerate}
\begin{lemma}\label{lemma-udg}
 A unit disk graph $G'=(V',E')$ can be constructed from the embedding in polynomial time.
\end{lemma}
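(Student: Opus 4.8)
The plan is to realize the grid embedding directly as a family of unit-diameter disks, one group of disks per edge together with one disk per vertex, and then take $G'$ to be their intersection graph. First I would place a disk center $p_i$ exactly at each vertex location $(4i,4j)$ of the embedding. Next, for every edge $e\in E$, whose drawing is a polyline made up of axis-parallel segments lying on grid lines $x=4i$ or $y=4j$, I would trace the polyline from one endpoint to the other and deposit additional disk centers at a fixed spacing of one unit, so that two consecutive centers along the same edge are at Euclidean distance at most $1$ and hence adjacent in $G'$. Because the endpoints and bend points of the polyline all have coordinates that are multiples of $4$, every segment has length a multiple of $4$, and the total number of centers deposited on $e$ is controlled by the integer polyline length; this is exactly what later lets me prescribe how many subdivision points each edge receives.

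The second step is to verify that the only adjacencies created are the intended ones, namely the edges joining consecutive centers along a common polyline and the edges joining the two extreme centers of $e$ to the vertex disks at its endpoints. Two centers lying on segments of different edges are kept apart by the grid: distinct parallel grid segments lie on lines at distance at least $4$, and perpendicular segments of different edges never share a point by the non-crossing property of the embedding, so centers from different edges are at distance strictly greater than $1$. The delicate regions are the vertices, where up to three edges meet, and the bend points, where a polyline turns. At a vertex the first centers of the incident edges sit one unit away in distinct axis directions and are therefore more than a unit apart from one another, so apart from the shared vertex disk they produce no adjacency; an analogous check handles the two centers flanking a bend, whose separation is $\sqrt{2}>1$.

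Finally I would bound the running time. The grid embedding occupies polynomial area by Lemma~\ref{key-lemma}, and since $G$ has maximum degree $3$ the total length of all edge polylines is polynomial in $|V|$; hence the number of disk centers $|V'|$ is polynomial, and each center is computed by tracing the embedding. Forming $E'$ by testing all $\binom{|V'|}{2}$ pairwise distances against $1$ is likewise polynomial, so $G'$ is constructed in polynomial time. I expect the main obstacle to be the distance analysis of the second step: one must fix the unit spacing and the placement of centers near vertices and bends carefully enough that consecutive centers always stay within distance $1$ while every other pair stays strictly beyond it, uniformly over all configurations of incident edges permitted by the degree-$3$, $4\times4$-grid embedding.
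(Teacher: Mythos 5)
Your proposal does establish the literal statement---subdividing the grid polylines with closely spaced centers and taking the intersection graph clearly yields a unit disk graph in polynomial time, and your distance checks at vertices and bends are sound---but it does not build the graph $G'$ that this lemma exists to define, and the differences are exactly the parts of the construction that the subsequent reduction depends on. First, the paper attaches to every node point $p_i$ an extra pendant segment of length $1.4$ carrying three \emph{support points} $x_i,y_i,z_i$ at distances $0.2$, $1.2$, $1.4$ from $p_i$, forming a path $p_i\!-\!x_i\!-\!y_i\!-\!z_i$. Your construction has no analogue of this gadget. It is not decorative: in the sufficiency direction of the theorem, the degree-one vertex $z_i$ forces $y_i,z_i\in L$ by the double-domination condition and then $x_i\in L$ by the pair condition, which is what pins down the $3n$ term in $k'=k+4l+3n$ and lets the argument strip $S$ off $L$ cleanly. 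Second, the paper does not space the edge points uniformly at unit intervals; it places exactly four \emph{joint points} per length-$4$ grid segment at prescribed offsets ($1,1.5,2.5,3.5$ from the vertex end of an improper segment; $0.5$ and $1.5$ from each end of a proper one), so that $|J|=4l$ exactly and so that ``two joint points per segment suffice to dominate it'' in the counting step. Your unit spacing gives a different (and less convenient) count of interior points per segment, which would break the arithmetic $|L''|-2l=k$.

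So while nothing in your argument is wrong as a proof that \emph{some} unit disk graph can be read off the embedding in polynomial time, the lemma is really a definitional lemma for the reduction gadget, and the two essential ideas of that gadget---the three-point pendant chain at each vertex and the calibrated four-points-per-segment placement---are absent from your construction. If you were to carry your version forward into the NP-completeness proof you would need to redo both directions of the equivalence with different constants, and the sufficiency direction would fail outright without something playing the role of the support points, since nothing in your $G'$ forces a liar's dominating set to contain a predictable, removable portion near each original vertex.
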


\begin{proof}
 Let us first embed the graph $G$ in the plane and divide the set of
 line segments in the embedding into two
 categories, namely, proper and improper. We call a line segment is proper if none of its end
 points corresponds to a vertex in $G$. For each edge $(p_i,p_j)$ of
 length 4 units we add four points such that two points at distances 1 and 1.5 units from
 $p_i$ and $p_j$ respectively (see edge $(p_4,p_6)$ in Figure \ref{grid_udg}(a)).
 For each edge of length greater than 4 units,
 we add the following points : for an improper line segment four points
 at distances 1, 1.5, 2.5, and 3.5 units respectively from the end point corresponds to a
 vertex in $G$ and four points for a proper line segment at distances
 0.5 and 1.5 units from its end points (see Figure \ref{grid_udg}(a)).
 If the total number of line segments
 used in the embedding is $l$, then the sum of the lengths of the line
 segments is $4l$ as each line segment has length 4 units.

 \begin{figure}
  \centering
  \mbox{
  \subfigure[]{\includegraphics[scale=0.5]{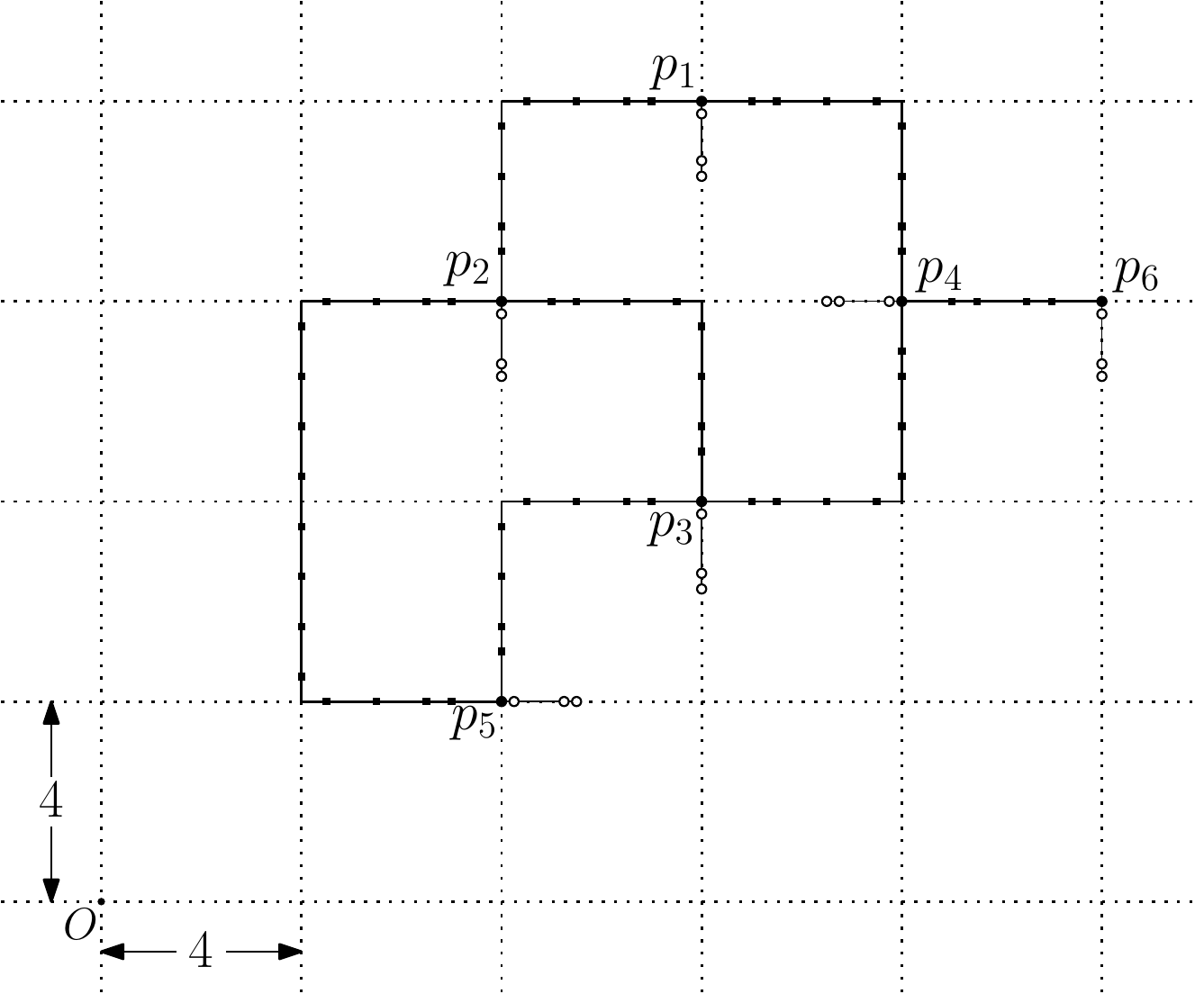}}
  \quad
  \subfigure[]{\includegraphics[scale=0.5]{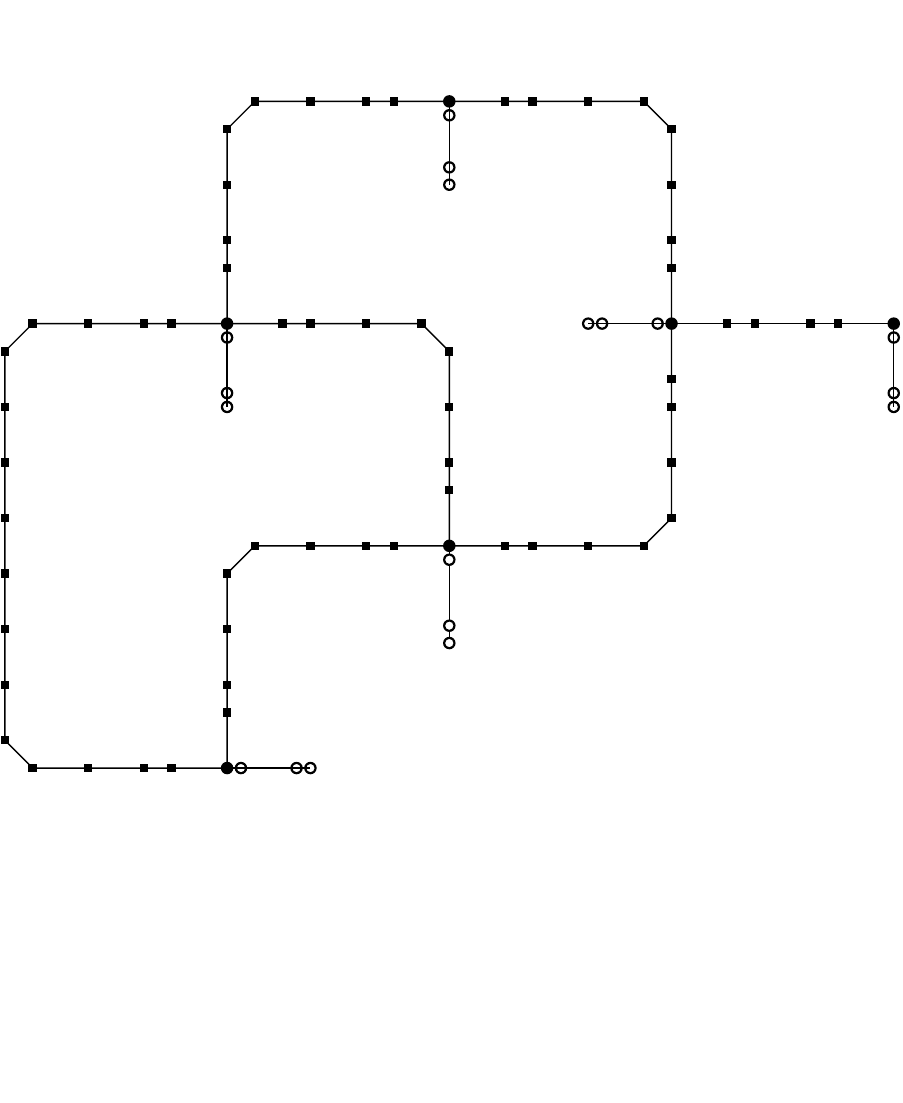}}}
  \caption{
  (a) Construction of unit disk graph from the embedding, and 
  (b) its corresponding unit disk graph.}\label{grid_udg}
  \end{figure}

 Draw a line segment of length 1.4 units (on the lines $x=4i$ or $y=4j$
 for some integers $i$ or $j$) for every point $p_i$ (as shown in
 Figure \ref{grid_udg}(a)) corresponds a vertex $v_i$ in
 $G$ without coinciding with the line segments that had already been drawn before.
 Observe that adding this line segment on the lines $x=4i$ or $y=4j$
 is possible with out loosing the
 planarity as the maximum degree of $G$ is 3. Now, add three points
 (say $x_i$, $y_i$, and $z_i$) at distances 0.2, 1.2, and 1.4 units respectively
 from $p_i$.
 
 For convenience we name the points added (i) correspond to vertices of
 $G$ by {\it node points} (ii) on the line segments of length greater
 than or equal to 4 by {\it joint points}, and (iii) on the line
 segments of length 1.4 by {\it support points}. Let us denote these three
 sets of points by $N$, $J$, and $S$ respectively. In Figure \ref{grid_udg}(a)
 these sets of points respresented as set of solid circles, solid squares,
 and circles respectively. Let $N = \{p_1,p_2,\ldots,p_n\}$,
 $J=\{q_1,q_2,\ldots,q_m\}$, and $S=\{x_i,y_i,z_i \mid 1\leq i\leq n\}$.
 After defining the above sets, remove all the line segments.

 Now we construct a UDG $G'=(V',E')$,
 where $V' = N \cup J \cup S$ and there is an edge between two points in $V'$
 if and only if the Euclidean distance between the points is at most 1
 (see Figure \ref{grid_udg}(b)). Observe
 that, $|N| = n$, $|J|=4l(=m)$, where $l$ is the total length of the segments
 having length greater than or equal to 4, and $|S|=3n$. Hence, $|V'| = 4(n+l)$
 and $l$ is bounded by a polynomial of $n$. Therefore $G'$ can be constructed in
 polynomial time.
 \end{proof}
 
\begin{theorem}
 UDG-LR-DOM is NP-complete.
\end{theorem}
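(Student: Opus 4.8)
The plan is to prove both membership in NP and NP-hardness, the latter by a polynomial-time reduction from PLA-DOM. Membership in NP is routine: given a candidate subset $L \subseteq V'$, one verifies in $O(|V'|^2)$ time that $|N[v] \cap L| \geq 2$ for every $v$ and that $|(N[u] \cup N[v]) \cap L| \geq 3$ for every pair $u \neq v$; this certificate has polynomial size, so UDG-LR-DOM $\in$ NP. For hardness, given an instance $(G,k)$ of PLA-DOM with $G$ of maximum degree $3$, I would build the unit disk graph $G' = (V',E')$ with $V' = N \cup J \cup S$ exactly as in Lemma \ref{lemma-udg}; by that lemma this takes polynomial time, with $|S| = 3n$ and $|J| = 4l = m$. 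The reduction outputs $(G', k')$ where $k' = k + 3n + \gamma$ and $\gamma$ is the number of joint points used by the \emph{canonical} solution described below, and the claim to establish is that $G$ has a dominating set of size at most $k$ if and only if $G'$ has a liar's dominating set of size at most $k'$.

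The backbone of the argument is a forcing analysis of the support points. For each node point $p_i$ the support points satisfy $d(y_i,z_i)=0.2$ and $d(x_i,y_i)=1$, while $d(x_i,z_i)=1.2$, $d(p_i,y_i)=1.2$, and $d(p_i,z_i)=1.4$, so that $N[z_i] = \{y_i, z_i\}$. Condition (i) applied to $z_i$ then forces $\{y_i,z_i\} \subseteq L$, and condition (ii) applied to the pair $(y_i,z_i)$, whose neighborhood union is $\{x_i,y_i,z_i\}$, forces $x_i \in L$ as well. Hence every liar's dominating set contains all $3n$ support points, which accounts for the fixed term $3n$ in $k'$. Moreover, since $x_i$ is the only support point within distance $1$ of $p_i$, the outstanding requirements at $p_i$ reduce to the demand that $N[p_i]\cap L$ and $(N[p_i]\cup N[x_i])\cap L$ each gain one further element, i.e. that $p_i$ be \emph{covered} either by itself or by a joint point lying on an incident edge. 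This is precisely the hook through which domination in $G$ is transmitted.

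With the support points pinned down, I would analyze the joint points edge by edge. Along an edge realized as the chain $p_i - a_1 - a_2 - a_3 - a_4 - p_j$ (and its longer analogues) the consecutive distances are $1,0.5,1,0.5,1$, so conditions (i) and (ii) couple neighboring joint points to one another and couple the first and last joint point to the incident node point. For the forward direction I would exhibit an explicit canonical liar's dominating set $L = S \cup L_J \cup \{p_i : v_i \in D\}$, where $L_J$ is the fixed per-edge selection of joint points, and verify both liar's conditions everywhere; this yields $|L| = 3n + \gamma + |D| \leq k'$. For the converse I would run an exchange/normalization argument showing that any liar's dominating set can be transformed, without increasing its cardinality, into one that contains all of $S$, restricts to the canonical pattern $L_J$ on each edge, and meets the node-layer only within $N$; the node points it then contains form a dominating set of $G$ of size at most $|L| - 3n - \gamma \leq k$, since covering each $p_i$ in $G'$ forces either $v_i$ or a neighbor of $v_i$ into that set.

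I expect the main obstacle to be the joint-point accounting together with the global verification of condition (ii). Unlike the support points, the joint points are \emph{not} individually forced: a liar's dominating set may omit a given $a_t$ at the cost of pulling in its neighbors, so determining the exact baseline $\gamma$ and proving the canonical pattern is optimal requires a careful case analysis of the chain, including the boundary regions where $p_i$, its support point $x_i$, and the first joint point of each incident edge all lie mutually close. Establishing that the normalization in the converse never requires a point outside $N$ in the node layer, and never forces two distinct node points where a single dominating vertex would suffice, is where the degree-$3$ hypothesis on $G$ and the non-crossing property (item~4 of the embedding) must be invoked to rule out interference between the gadgets of different edges.
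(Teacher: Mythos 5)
Your NP-membership argument and your forcing analysis of the support points are correct, and in fact the latter is cleaner than what the paper does: you derive $N[z_i]=\{y_i,z_i\}$ from the stated distances, get $y_i,z_i\in L$ from condition (i), and get $x_i\in L$ from condition (ii) applied to the pair $(y_i,z_i)$, whereas the paper invokes an unproved claim that every component of $\langle L\rangle$ has at least three vertices. However, the proposal as written has a genuine gap: the reduction itself is not fully specified. You set $k'=k+3n+\gamma$ where $\gamma$ is the size of a ``canonical'' per-edge selection $L_J$ of joint points, but you never define $L_J$ or compute $\gamma$, and you explicitly defer the two steps that carry all the difficulty --- verifying condition (ii) for the canonical set in the forward direction, and proving in the converse that every liar's dominating set can be normalized to contain at least $\gamma$ joint points without touching the node layer. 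Until $L_J$ and $\gamma$ are pinned down and that matching lower bound is proved (including at the boundary where $p_i$, $x_i$, and the first joint points of up to three incident edges interact), there is no if-and-only-if statement to check, so the hardness direction is a plan rather than a proof.

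It is also worth noting that your route is deliberately harder than the paper's. The paper sets $k'=k+4l+3n$ and, for necessity, simply takes $L=D\cup J\cup S$, i.e.\ \emph{all} joint points; this makes condition (i) and most of condition (ii) hold automatically and removes any need to identify an optimal per-edge pattern. The price is paid in the sufficiency direction, where one must argue that at least $4l$ of the budget is consumed by joint points (the paper's own accounting there, passing from $|L'|\le k+4l$ to $|L''|\le k+2l$, is itself shaky). Your tight-$\gamma$ version would, if completed, yield a cleaner equivalence, but it requires exactly the per-edge optimality case analysis you acknowledge omitting. Either complete that analysis, or switch to the paper's slack choice $\gamma=4l=|J|$, in which case the forward direction becomes a routine verification and only the joint-point lower bound in the converse needs care.
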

 \begin{proof}
 For any given subset $L$ of $V'$ and a positive integer $k'$, it is easy
 to verify that the subset $L$ is a liar's dominating set of size at most
 $k$ or not. Hence UDG-LR-DOM belongs to the class NP.
 
 We prove the hardness of UDG-LR-DOM by reducing PLA-DOM to it. Let an
 instance, $G=(V,E)$, of PLA-DOM has been given. Construct an instance, a UDG $G'=(V',E')$,
 of UDG-LR-DOM as discussed in Lemma \ref{lemma-udg}.
 We now prove the following claim : {\it $G$ has a dominating set of size at most $k$
 if and only if $G'$ has a liar's dominating set of size at most $k'=k + 4l + 3n$}.

 \noindent {\bf Necessity :} Let $D \subseteq V$ be the given dominating set of $G$ with $|D| \leq k$.
 Let $L = D \cup J \cup S$. We prove that $L$ is a liar's dominating set of $G'$.
 
 (i) Every point $p_i$ in $N$ is dominated once by a point $x_i$ in $S$ and by at
 least one point in $J$. Since $J\subseteq L$, every point in $J$ is
 dominated by itself and by it's neighbor and maybe by one point in $D$.
 Similarly, $S \subseteq L$, every point in $S$ is double dominated by
 points in $L$. Thus, every point in $V'$ satisfies the first condition
 of liar's dominating set.
 
 (ii) Now consider every distinct pair of points in $V'$. Every point $p_i$ in $N$ is
 dominated by $x_i$ and some $q_i$ in $J$. Therefore, $|(N[p_i]\cup N[p_j])\cap L|\geq |\{x_i,q_i,x_j,q_j\}|=4$. Similarly, $|(N[p_i]\cup N[x_j])\cap L|\geq |\{x_i,q_i,x_j,y_j\}|=4$ and
 $|(N[q_i]\cup N[p_j])\cap L|\geq |\{q_i,x_j,q_j\}|=3$. Also $|(N[x_i]\cup N[z_j])\cap L|\geq |\{x_i,y_i,z_j,y_j\}|=4$. In the same way we can prove that
 the rest of the pair combinations have at least three points of
 $L$ in their closed neighborhood union. Thus every distinct pair of points
 in $V'$ satisfies the second condition of liar's dominating set.

 So $L$ is a liar's dominating set of $G'$ and $|L| = |D| + |J| + |S|\leq k+4l+3n=k'$.
 Thus the necessity follows.
 
 \noindent {\bf Sufficiency :} Let $L \subseteq V'$ be a liar's
 dominating set of size at most $k'=k+4l+3n$. We prove that $G$ has a
 dominating set of size at most $k$.

 Observe that we added points $x_i,y_i, z_i$ in such a way that $p_i$
 is adjacent to $x_i$, $x_i$ is adjacent to $y_i$ and $y_i$ is adjacent
 to $z_i$ i.e., $\{(p_i,x_i),(x_i,y_i),(y_i,z_i)\}\subset E'$ for each $i$. Hence,
 $z_i$ and $y_i$ must be in $L$ due to the first condition of liar's domination.
 Also, every component of $\langle L \rangle$ must contain at least three vertices due to the second
 condition of liar's domination. Hence, $x_i \in L$. Therefore, any liar's dominating
 set of $G'$ must contain $\{x_i,y_i,z_i\}, 1\leq i \leq n$ i.e., $S \subset L$.
 These account for $3n$ vertices of $L$. Let $L'=L\setminus S$.
 Now we shall show that, by removing or replacing some points in $L'$, $k$ node points
 can be chosen such that the corresponding vertices in $G$ is a dominating set of $G$.
 Note that $L'$ is a dominating set of the UDG $G''=(V'',E'')$,
 where $V''=V'\setminus S$, $E''=E'\setminus \{(p_i,x_i),(x_i,y_i),(y_i,z_i) \mid 1\leq i \leq n\}$ and $|L'|=k+4l$. In order to ensure the liar's domination, every segment of
 length greater than or equal to 4 in $G'$
 should have at least two joint points in $L'$. If there are more than
 two joint points corresponding to a segment in $L'$, then we remove
 and/or replace the joint points so that each segment will have only
 two joint points while ensuring the domination. Now, $L'$ has been
 updated. Let $L''$ is the set obtained after updating $L'$ and
 $L''$ is also a dominating set of $G''$ with cardinality at most $k+2l$.
 
 We obtain the required dominating set $D$ of $G$
 from $L''$ as follows : consider a series of line segments, say
 $I=[p_i,p_j]$, corresponding an edge $(p_i,p_j)$ of $G''$, where $|I| = 4l'$ i.e.,
 $I$ has $l'$ segments. If none of $p_i$ and $p_j$ are in $L''$, then replace a point in
 $L''$ by $p_i$ with out loosing the domination property (existence
 of such a point is guaranteed as $L''$ is a dominating set). We apply this to all $I$'s.
 After applying the above process to all $I$'s, if there is an edge $(p_i,p_j)$ such that
 none of $p_i$ and $p_j$ are in $L'$, then there must exist $I_1=[p_s,p_i]$
 and $I_2=[p_t,p_j]$ with lengths $4l_1$ and $4l_2$ corresponding to
 some edges in $G''$ such that $p_s$ and $p_t$ are in $L''$. From the above preprocessing it
 is clear that $I_1$ and $I_2$ have at least $2l_1$ and
 $2l_2$ joint points in $L''$.

 From the above argument, there are at least $2l$ joint points in $L''$, where
 $l$ is the total number of line segments used in $G''$. This means that there are
 at most $|L''|-2l(=k)$ node points in $L''$.
 
 Let $D=\{v_i \in V \mid v_i \text{ corresponds to a node point in $L''$}\}$.
 So, $D$ is a dominating set of $G$ and $|D|\leq k$. Thus the sufficiency follows.

 Hence, UDG-LR-DOM is NP-complete.
\end{proof}

\section{Conclusion}
In this article we considered the liar's domination problem on unit disk graphs and
proved that the problem belongs to NP-complete class.

\end{document}